\newcommand{\Hmm}[1]{\leavevmode{\marginpar{\tiny%
$\hbox to 0mm{\hspace*{-0.5mm}$\leftarrow$\hss}%
\vcenter{\vrule depth 0.1mm height 0.1mm width \the\marginparwidth}%
\hbox to 0mm{\hss$\rightarrow$\hspace*{-0.5mm}}$\\\relax\raggedright #1}}}
\newcommand{\nc}{\newcommand}
\nc{\les}{\lesssim}
\nc{\nit}{\noindent}
\nc{\nn}{\nonumber}
\nc{\D}{\partial}
\nc{\diff}[2]{\frac{d #1}{d #2}}
\nc{\diffn}[3]{\frac{d^{#3} #1}{d {#2}^{#3}}}
\nc{\pdiff}[2]{\frac{\partial #1}{\partial #2}}
\nc{\pdiffn}[3]{\frac{\partial^{#3} #1}{\partial{#2}^{#3}}}
\nc{\abs}[1] {\lvert #1 \rvert}
\nc{\cAc}{{\cal A}_c}
\nc{\cE}{{\cal E}}
\nc{\cF}{{\mathcal F}}
\nc{\cP}{{\cal P}}
\nc{\cV}{{\cal V}}
\nc{\cQ}{{\cal Q}}
\nc{\cGin}{{\cal G}_{\rm in}}
\nc{\cGout}{{\cal G}_{\rm out}}
\nc{\cO}{{\cal O}}
\nc{\Lav}{{\cal L}_{\rm av}}
\nc{\cL}{{\cal L}}
\nc{\cB}{{\cal B}}
\nc{\cZ}{{\cal Z}}
\nc{\cR}{{\cal R}}
\nc{\cT}{{\cal T}}
\nc{\cY}{{\cal Y}}
\nc{\cX}{{\cal X}}
\nc{\cXT}{{{\cal X}(T)}}
\nc{\cBT}{{{\cal B}(T)}}
\nc{\vD}{{\vec \mathcal{D}}}
\nc{\efield}{\mathcal{E}}
\nc{\vE}{{\vec \efield}}
\nc{\vB}{{\vec \mathcal{B}}}
\nc{\vH}{{\vec \mathcal{H}}}
\nc{\ty}{{\tilde y}}
\nc{\tu}{{\tilde u}}
\nc{\tV}{{\tilde V}}
\nc{\Pc}{{\bf P_c}}
\nc{\bx}{{\bf x}}
\nc{\bX}{{\bf X}}
\nc{\bXYZ}{{\bf XYZ}}
\nc{\bY}{{\bf Y}}
\nc{\bF}{{\bf F}}
\nc{\bS}{{\bf S}}
\nc{\dV}{{\delta V}}
\nc{\dE}{{\delta E}}
\nc{\TT}{{\Theta}}
\nc{\dPsi}{{\delta\Psi}}
\nc{\order}{{\cal O}}
\nc{\Rout}{R_{\rm out}}
\nc{\eplus}{e_+}
\nc{\eminus}{e_-}
\nc{\epm}{e_\pm}
\nc{\eps}{\varepsilon}
\nc{\vnabla}{{\vec\nabla}}
\nc{\G}{\Gamma}
\nc{\w}{\omega}
\nc{\mh}{h}
\nc{\mg}{g}
\nc{\vphi}{\varphi}
\nc{\tlambda}{\tilde\lambda}
\nc{\be}{\begin{equation}}
\nc{\ee}{\end{equation}}
\nc{\ba}{\begin{eqnarray}}
\nc{\ea}{\end{eqnarray}}
\nc{\g}{\gamma}
\nc{\ol}{\overline}
\newtheorem{theorem}{Theorem}[section]
\newtheorem{lemma}[theorem]{Lemma}
\newtheorem{prop}[theorem]{Proposition}
\newtheorem{defin}[theorem]{Definition}
\nc{\T}{\mathbb T}
\nc{\Z}{\mathbb Z}
\nc{\N}{\mathbb N}
\nc{\pt}{\partial_t}
\nc{\la}{\langle}
\nc{\ra}{\rangle}
\nc{\infint}{\int_{-\infty}^{\infty}}
\nc{\halfwidth}{6.5cm}
\nc{\figwidth}{10cm}
\nc{\nlayers}{L} \nc{\nsectors}{M}
\nc{\indicator}{\mathbf{1}}
\nc{\Rhole}{R_{\rm hole}}
\nc{\Rring}{R_{\rm ring}}
\nc{\neff}{n_{\rm eff}}
\nc{\Frem}{F_{\rm rem}}
\nc{\DD}{\Delta}
\nc{\cD}{\mathcal D}
\nc{\lnorm}{\left\|}
\nc{\rnorm}{\right\|}
\nc{\rnormp}{\right\|_{\ell^{p,\eps}}}
\nc{\rar}{\rightarrow}
\nc{\sgn}{{\rm sign}}
\date{\today}
\begin{document}

\title[Almost Sure Global Well-posedness for Fractional Cubic NLS on $\mathbb{T}$]{Almost Sure Global Well-posedness for Fractional Cubic Schr\"odinger equation on torus}

\author{Seckin Demirbas}

\address{Department of Mathematics \\
University of Illinois \\
Urbana, IL 61801, U.S.A.}

\email{demirba2@illinois.edu}

\begin{abstract}

In \cite{ourpaper}, we proved that $1$-d periodic fractional Schr\"odinger equation with cubic nonlinearity is locally well-posed in $H^s$ for $s>\frac{1-\alpha}{2}$ and globally well-posed for $s>\frac{5\alpha-1}{6}$. In this paper we define an invariant probability measure $\mu$ on $H^s$ for $s<\alpha-\frac{1}{2}$, so that for any $\epsilon>0$ there is a set $\Omega\subset H^s$ such that $\mu(\Omega^c)<\epsilon$ and the equation is globally well-posed for initial data in $\Omega$. We see that this fills the gap between the local well-posedness and the global well-posedness range in almost sure sense for $\frac{1-\alpha}{2}<\alpha-\frac{1}{2}$, i.e. $\alpha>\frac{2}{3}$ in almost sure sense.

\end{abstract}

\maketitle

\section{Introduction}

We consider the cubic periodic fractional Schr\"odinger equation,

\begin{equation}\label{eqn:sch}
\left\{
\begin{array}{l}
iu_{t}+(-\Delta)^{\alpha} u =\gamma|u|^2u, \,\,\,\,  x \in {[0,2\pi]}, \,\,\,\,  t\in \mathbb{R} ,\\
u(x,0)=u_0(x)\in H^{s}([0,2\pi]), \\
\end{array}
\right.
\end{equation}
where $\alpha \in (1/2,1)$ and $\gamma=\pm 1$. The equation is called focusing for $\gamma= 1$ and defocusing for $\gamma=- 1$

On real line, this equation arises as a model in the theory of fractional quantum mechanics, see \cite{laskin}. In \cite{kay}, Kirkpatrick, Lenzmann and  Staffilani derived it as a continuum limit of a model for the interaction of quantum particles on lattice points. Allowing only the nearest points to  interact gives the usual cubic Schr\"odinger equation whereas allowing long range interactions gives rise to the fractional Schr\"odinger equations with paremeter $\alpha$.

For $\alpha=1$, in \cite{bourgain}, Bourgain proved periodic Strichartz's estimates and showed $L^2$ local and global well-posedness for the cubic Schr\"odinger equation. In \cite{bgt}, Burq, Gerard and Tzvetkov noted that this result is sharp since the solution operator is not uniformly continuous on $H^s$ for $s<0$. 

The fractional Schr\"odinger equation on real line was recently studied in \cite{kwon}. For $\alpha\in (1/2,1)$, the equation is less dispersive, so one would not expect to  be able to get local well-posedness on $L^2$ level. Indeed, they proved that there is local well-posedness on $H^s$ for $s\geq \frac{1-\alpha}{2}$. They also showed that the solution operator fails to be uniformly continuous in time for $s<\frac{1-\alpha}{2}$.

 In \cite{ourpaper}, we proved that the periodic fractional equation is locally well-posed in $H^s$, for $s>\frac{1-\alpha}{2}$ using direct $X^{s,b}$ estimates. Further, we proved a Strichartz's estimate of the form,
$$
\|e^{it(-\Delta)^\alpha}f\|_{L^4_{t\in \mathbb{R}}L^4_{x\in \mathbb{T}}}\lesssim \|f\|_{H^s},
$$
 for $s>\frac{1-\alpha}{4}$, which also gives local well-posedness for $s>\frac{1-\alpha}{2}$, using the methods in \cite{cat-wang} and \cite{mypaper}. 
 
  Moreover, we proved in \cite{ourpaper} that the defocusing equation is globally well-posed for $s>\frac{5\alpha+1}{6}$, using Bourgain's high-low frequency decomposition introduced in \cite{highlow}. This method uses the decomposition of the equation into the evolutions of the high and the low frequencies of the initial data. Since the low frequancy part is smooth, its evolution is global due to the conservation of the energy. But the same cannot be said for the high frequancy part. To overcome this problem we showed that the nonlinear evolution of the high frequency part is smoother than the initial data. We should mention that for $\alpha=1$ the smoothing coincides with the smoothing estimate for the NLS that was recently obtained in \cite{talbot}.

  After obtaining these local and global well-posedness results, the natural question that arises is how much we can push the global well-posedness range. For example, the cubic periodic Schr\"odinger equation ($\alpha=1$) in $1$-d is locally well-posed in $L^2$, see  \cite{bourgain}, and with the mass conservation, we know that the equation is globally well-posed. That is, conservation laws on the local well-posedness level may give rise to global well-posedness. But then, one can ask whether we can show that the equation is globally well-posed whenever it is locally well-posed. Although when there is no conservation laws on the local well-posedness level, it is not trivial that this statement is true, we can still make sense of the question in a different way. The idea relies on the intuition that the set of 'bad' initial data, where the solutions of the equation with those initial data, may have arbitrarily large norm, should be negligible. This approach of looking at the problem in an 'almost sure' sense originated from the work of Lebowitz, Rose and Spear, \cite{leb}. They were trying to understand the general behavior of a system containing a large number of particles by looking at the values of the observables by taking averages over certain probability distributions containing only a few parameters instead of looking at the individual initial value problems. With this in mind, they constructed probability measures on Sobolev spaces and proved some basic properties of these measures.

  Later, Bourgain. in \cite{bourgain_measure}, proved that the Schr\"odinger equation with power nonlinearity,
\begin{equation}\label{eqn:sch}
\left\{
\begin{array}{l}
iu_{t}-\Delta u =-|u|^{p-2}u, \,\,\,\,  x \in {[0,2\pi]}, \,\,\,\,  t\in \mathbb{R} ,\\
u(x,0)=u_0(x)\in H^{s}([0,2\pi]), \\
\end{array}
\right.
\end{equation}
where $4<p\leq 6$ is locally well-posed in $H^s$ with $s>0$. But for $0<s<1$ there is no conservation law which would easily allow us to extend the local solutions to global ones. He used the idea of Lebowitz, Rose and Spear to construct a probability measure, also known as Gibbs' measure, on $H^s$ for $s<\frac{1}{2}$ which is invariant under the solution flow. Then he showed that for any $\epsilon>0$, there is global in time $H^s$ norm bounds on the solutions with the initial data in $H^s$ up to a set of measure less than $\epsilon$, i.e. the equation is almost surely globally well-posed in $H^s$ for $0<s<\frac{1}{2}$.

The idea of Gibbs' measures and almost sure global well-posedness later have been used to prove similar results for different equations by \cite{bulut}, \cite{bulut2}, \cite{burq}, \cite{coll}, \cite{oh}, \cite{pav}, \cite{oh2}, \cite{oh3}, \cite{richards} and many others.

Our main result here, is the explicit construction of Gibbs' measure for $1$-d fractional periodic cubic Schr\"odinger equation and the proof of almost sure global well-posedness. More precisely, we define an invariant probability measure $\mu$ on $H^s$, for $s<\alpha-\frac{1}{2}$ such that for any $\epsilon>0$ we can find a set $\Omega\subset H^s$ satisfying $\mu(\Omega^c)<\epsilon$ and the solution to the equation \eqref{eqn:sch} exists globally for all initial data in $\Omega$.
  
  For that, we are going to truncate the equation \eqref{eqn:sch}, and use the idea of invariant measures on finite dimensional Hamiltonian systems. Namely, if we look at the equation,
\begin{equation}\label{trun}
\left\{
\begin{array}{l}
iu^N_{t}+(-\Delta)^{\alpha} u^N =\gamma P_N|u^N|^2u^N,\\
u^N(x,0)=P_Nu_0(x) \\
\end{array}
\right.
\end{equation}
  where $P_N$ is the projection operator onto the first $N$ frequencies, we see that \eqref{trun} is a finite dimensional Hamiltonian system, with the Hamiltonian,
$$
H_N(u)(t)=\frac{1}{2}\sum_{n\leq N}\big||n|^{\alpha} \widehat{u_n(t)}\big|^2-\frac{\gamma}{4}\int_{\mathbb{T}}|\sum_{n\leq N} e^{inx}\widehat{u_n(t)}|^4.
$$

By Liouville's theorem, we know that the Lebesgue measure $\prod_{|n|\leq N} d\widehat{u_n}$ is invariant under the Hamiltonian flow. Thus, by the conservation of the Hamiltonian and the invariance of the Lebesgue measure under the flow, we see that the finite measure,
$$
d\mu_N=e^{-H_N(u)}\prod_{|n|\leq N} d\widehat{u_n},
$$
is invariant under the solution operator, call it S(t).

We see that equation \eqref{eqn:sch} is an infinite dimensional Hamiltonian system on the Fourier side with the Hamiltonian, 
$$
H(u(t))=\frac{1}{2}\sum_{n}\big||n|^{\alpha} \widehat{u_n(t)}\big|^2-\frac{\gamma}{4}\int_{\mathbb{T}}|\sum_{n} e^{inx}\widehat{u_n(t)}|^4=H(u_0),
$$

Then we define the limiting measure $\mu$ on $H^s$ as, 
$$
d\mu=e^{-H(u)}\prod_{n} d\widehat{u_n}=e^{-\frac{1}{2}\sum_{n}\big||n|^{\alpha} \widehat{u_n(t)}\big|^2+\frac{\gamma}{4}\int_{\mathbb{T}}|\sum_{n} e^{inx}\widehat{u_n(t)}|^4}\prod_{n} d\widehat{u_n},
$$
and show that the measure $\mu$ is indeed the weak limit of $\mu_N$.

 To construct this measure $\mu$ on appropriate $H^s$ spaces, we use the theory of Gaussian measures on Hilbert spaces following Zhidkov's arguments in \cite{zhid}, and first define,
$$
dw=e^{-\frac{1}{2}\sum_{n}\big||n|^{\alpha} \widehat{u_n(t)}\big|^2}\prod_{n} d\widehat{u_n}.
$$
Then we show that the measure $\mu$ is absolutely continuous with respect to the Gaussian measure $w$ under certain conditions and finish the proof of almost sure global well-posedness by constructing the set $\Omega\subset H^s$ as stated above. For the second part we will mainly use Bourgain's arguments in \cite{bourgain_measure}.

\section{Notations and Preliminaries}

Recall that for $s\geq 0$, $H^s(\T)$ is defined as a subspace of $L^2$ via the norm
$$
\|f\|_{H^s(\T)}:=\sqrt{\sum_{k\in\Z} \la k\ra^{2s} |\widehat{f}(k)|^2},
$$
where $\la k\ra:=(1+k^2)^{1/2}$ and $\widehat{f}(k)=\frac{1}{2\pi}\int_0^{2\pi}f(x)e^{-ikx} dx$ are the Fourier coefficients of $f$. 
We  use $(\cdot)^+$ to denote $(\cdot)^\epsilon$ for all $\epsilon>0$ with implicit constants depending on $\epsilon$.

   We denote the linear propogator of the equation as $e^{-it(-\Delta)^\alpha}$, which is defined on the Fourier side as $$\widehat{(e^{-it(-\Delta)^\alpha}f)}(n)=e^{it|n|^{2\alpha}}\widehat{f}(n),$$
 and $|\nabla|^{\alpha}$ is defined as $(\widehat{|\nabla|^{\alpha}f)}(n)=  |n|^{ \alpha} \widehat{f}(n)$.
   
    When we say the equation \eqref{eqn:sch} is locally well-posed in $H^s$, we mean that there exist a time $T_{LWP}=T_{LWP}(\|u_0\|_{H^s})$ such that the solution exists and is unique in $X^{s,b}_{T_{LWP}}\subset C([0,T_{LWP}),H^s)$ and depends continuously on the initial data. We say that the the equation is globally well-posed when $T_{LWP}$ can be taken arbitrarily large. Here $X^{s,b}$ denote the Bourgain spaces, and are defined via the restriction in time, of the norm,
 \begin{equation}
 \|u\|_{X^{s,b}}\dot{=}\|e^{it(-\Delta)^\alpha}u\|_{H^b_t(\mathbb{R})H^s_x(\mathbb{T})}=\|\langle \tau-|n|^{2\alpha} \rangle^{b} \langle n \rangle^s\widehat{u}(n,\tau)\|_{L_{\tau}^2 l^2_{(m,n)}},
 \end{equation}
 
 and $\langle x \rangle=(1+|x|^2)^{1/2}$

 By Duhamel Principle, we know that the smooth solutions of \eqref{eqn:sch} satisfy the integral equation
 $$u(t,x)=e^{-it(-\Delta)^\alpha}u_0(x)+i\gamma \int_0^t e^{-i(t-\tau)(-\Delta)^\alpha}|u|^2u(\tau,x)d\tau.$$

We note that along with the Hamiltonian conservation, the equation enjoys mass conservation, namely,
   $$M(u)(t)=\int_{\mathbb{T}}|u(t,x)|^2=M(u)(0).$$

\section{Almost Sure Global Well-posedness}

The main result of this paper is,

\begin{theorem}\label{main}
For $s<\alpha-\frac{1}{2}$ and $\epsilon>0$, there exists an invariant probability measure $\mu$ on $H^s$ such that the equation \eqref{eqn:sch} is globally well-posed for any initial data $u_0 \in \Omega\subset H^s$ such that $\mu(\Omega^c)<\epsilon$ with,
$$
\|u(t)\|_{H^s}\lesssim \Big( \log\big( \frac{1+|t|}{\epsilon} 
\big) \Big)^{s+} .
$$
\end{theorem}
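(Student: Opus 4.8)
The plan is to follow Bourgain's invariant measure scheme, adapted to the fractional dispersion, in three stages: construct the measure, transfer invariance from the finite-dimensional truncations to the limit, and then extract the set $\Omega$ of good data with the claimed logarithmic growth bound.

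\medskip

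\noindent\textbf{Step 1: Construction of $\mu$.} First I would build the Gaussian measure $w$ associated with the quadratic part $\frac12\sum_n \la n\ra^{2\alpha}|\widehat{u}_n|^2$. Writing $\widehat{u}_n = g_n(\omega)/\la n\ra^\alpha$ with $\{g_n\}$ i.i.d.\ complex Gaussians, one checks that a typical such series lies in $H^s$ precisely when $\sum_n \la n\ra^{2s-2\alpha}<\infty$, i.e.\ for $s<\alpha-\tfrac12$; this is where the regularity threshold in Theorem \ref{main} comes from. Then I would show that the density $e^{\gamma/4\int_{\T}|u|^4}$ (times the mass cutoff needed in the focusing case $\gamma=1$, restricting to $\|u\|_{L^2}\le B$) is in $L^1(dw)$, so that $d\mu = Z^{-1} e^{\gamma/4\int|u|^4}\,\mathbf{1}_{\{\|u\|_{L^2}\le B\}}\,dw$ is a well-defined probability measure on $H^s$, absolutely continuous with respect to $w$. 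The $L^1$ integrability uses that $u\in H^s$ with $s>0$ embeds into $L^4$ once $\alpha-\tfrac12>0$, together with a standard large-deviation/Nelson-type estimate controlling the tail of $\int|u|^4$; for the finite truncations $\mu_N$ one gets these bounds uniformly in $N$, and a Fatou/dominated-convergence argument gives weak convergence $\mu_N\rightharpoonup\mu$.

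\medskip

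\noindent\textbf{Step 2: Invariance.} On the truncated system \eqref{trun}, Liouville's theorem plus conservation of $H_N$ and of the truncated mass gives that $\mu_N$ is invariant under the truncated flow $S_N(t)$. To pass to the limit I would use the local well-posedness theory from \cite{ourpaper}: the truncated flows approximate the full flow $S(t)$ on compact time intervals for data in a large-probability set, so for any cylinder (hence any Borel) set $A$ and any $t$, $\mu(S(t)^{-1}A) = \lim_N \mu_N(S_N(t)^{-1}A) = \lim_N \mu_N(A) = \mu(A)$, establishing invariance of $\mu$ under $S(t)$.

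\medskip

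\noindent\textbf{Step 3: Globalization and the growth bound.} This is the heart of Bourgain's argument and the step I expect to be the main obstacle, since it requires quantitative control that interlocks the local theory's time-of-existence $T_{LWP}(\|u_0\|_{H^s})$ with the measure of the exceptional set. Fix $\epsilon>0$. Using the invariance of $\mu$ and the tail estimate $\mu(\{\|u\|_{H^s}>\lambda\})\lesssim e^{-c\lambda^2}$ (or the appropriate polynomial/exponential tail coming from Step 1), I would, for each dyadic time step, remove a set of data whose flow exceeds a threshold $\lambda_j \sim \big(\log(2^j/\epsilon)\big)^{1/2+}$ at time $j\,T_{LWP}(\lambda_j)$; invariance converts the bound on the flowed-out set at time $jT$ into a bound on a $\mu$-set at time $0$, and choosing the thresholds to grow like a power of $\log(1+|t|)$ makes the geometric-type series $\sum_j \mu(\text{bad}_j)$ sum to $<\epsilon$ while keeping each $T_{LWP}(\lambda_j)$ from shrinking too fast (here one uses that $T_{LWP}$ depends polynomially on the norm, so $jT_{LWP}(\lambda_j)$ still reaches every time $t$). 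Intersecting the complements over all $j$ produces $\Omega$ with $\mu(\Omega^c)<\epsilon$, and for $u_0\in\Omega$ the solution exists for all time with $\|u(t)\|_{H^s}\lesssim\lambda_{j(t)}\lesssim\big(\log\frac{1+|t|}{\epsilon}\big)^{s+}$, matching the stated exponent after absorbing the gain from $s$ into the $(\cdot)^+$. The delicate bookkeeping — reconciling the dependence of $T_{LWP}$ on $\lambda$, the dyadic-in-time decomposition, and the $\epsilon$-budget across infinitely many steps — is where the real work lies; the rest follows the now-standard template.
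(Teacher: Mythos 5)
There is a genuine structural gap: your globalization step is circular. In Step 3 you use ``the invariance of $\mu$'' under the full flow $S(t)$ to convert bounds on flowed-out sets at time $jT$ into bounds on sets at time $0$, and Step 2's proof of that invariance already presupposes that $S(t)$ is defined on arbitrary time intervals for $\mu$-a.e.\ datum (you write $\mu(S(t)^{-1}A)=\lim_N\mu_N(S_N(t)^{-1}A)$, which requires the full flow to exist long enough for the truncated flows to approximate it). But global existence of $S(t)$ is exactly the conclusion of the theorem. The paper avoids this by running the whole Bourgain iteration at the level of the truncated system \eqref{trun}: the flows $S_N$ are globally defined finite-dimensional ODE flows, $\mu_N$ is invariant by Liouville plus conservation of $H_N$, and the sets $\Omega_N'=\Omega^K\cap S_N^{-1}(\Omega^K)\cap\cdots\cap S_N^{-T/T_{LWP}}(\Omega^K)$ give uniform bounds $\|u^N(t)\|_{H^s}\lesssim K$ on $[0,T]$ outside a set of $\mu_N$-measure $\lesssim (T/T_{LWP})e^{-K^2/4}$. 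Only then does the approximation Lemma \ref{convergence} enter, to show that these uniform bounds on $u^N$ force the solution $u$ of \eqref{eqn:sch} to exist on $[0,T]$ and to stay close to $u^N$ in $H^{s'}$, $s'<s$; the invariance of the limit measure $\mu$ under $S$ is proved \emph{afterwards}, as a consequence of the (now established) almost sure global well-posedness. Your proposal never invokes such an approximation lemma, and without it there is no mechanism to transfer the measure-theoretic control from the truncations to the full equation.

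A second, smaller gap concerns the exponent. Your construction yields thresholds $\lambda_j\sim(\log(2^j/\epsilon))^{1/2}$ and hence at best $\|u(t)\|_{H^s}\lesssim(\log\frac{1+|t|}{\epsilon})^{1/2}$; since $s<\alpha-\tfrac12<\tfrac12$, the stated bound $(\log\frac{1+|t|}{\epsilon})^{s+}$ is strictly stronger, and it cannot be obtained by ``absorbing the gain from $s$ into the $(\cdot)^+$.'' The paper gets the improvement by an extra step you omit: interpolating the $(\log)^{1/2}$ bound in $H^{s'}$ (for a family of $s'$ up to $\alpha-\tfrac12$) against the conserved mass $\|u(t)\|_{L^2}=\|u_0\|_{L^2}$, which lowers the power of the logarithm. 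You would need to add this interpolation-with-mass-conservation argument (and track the resulting exponent carefully) to reach the claimed growth rate.
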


  As we mentioned above, in the proof of this theorem, we first define the finite dimensional measures $\mu_N$, which are invariant under the solution operator of the truncated equation \eqref{trun}, and we define $\mu$ as the weak limit of these measures. But then we have to show how the equation \eqref{eqn:sch} and the truncated equation \eqref{trun} are related, namely,

\begin{lemma}\label{convergence}
Let $A\in \mathbb{R}$ and  $u_0\in H^s$ be such that $\|u_0\|_{H^s}<A$, and assume that the solution, $u_N$, of \eqref{trun} satisfies $\|u_N(t)\|_{H^s}<A \text{ \ for\ \ } t\leq T.$
Then the equation \eqref{eqn:sch} is well-posed in $[0,T]$ and moreover, for any $s'<s$, we have,
\begin{equation}\label{approx}
\|u(t)-u_N(t)\|_{H^{s'}}\leq e^{C_1(1+A)^{C_2}T}N^{s'-s},
\end{equation}
where $C_1$ and $C_2$ independent of $s$.
\end{lemma}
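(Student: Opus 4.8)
The plan is to compare the Duhamel formulations of \eqref{eqn:sch} and \eqref{trun} and run a Gronwall-type argument in the weaker space $H^{s'}$, using the assumed uniform $H^s$-bounds to control the nonlinear interactions. First I would set $v = u - u_N$ and subtract the two integral equations; writing the cubic terms as $|u|^2u - P_N|u_N|^2u_N = (|u|^2u - P_N|u|^2u) + P_N(|u|^2u - |u_N|^2u_N)$, the first piece is a ``tail'' term bounded by $N^{s'-s}\|u\|_{H^s}^3$ (since projecting away the first $N$ frequencies costs $N^{s'-s}$ in passing from $H^s$ to $H^{s'}$), and the second piece is a difference of cubics, which factors as a sum of terms each linear in $v$ and at most quadratic in $u, u_N$. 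The key multilinear estimate I need is a product/trilinear bound of the form $\| P_N(f g \bar h)\|_{H^{s'}} \lesssim \|f\|_{H^{s}}\|g\|_{H^{s}}\|h\|_{H^{s}}$-type inequalities (valid since $s' < s$ and, crucially, $s$ is above the local well-posedness threshold $\frac{1-\alpha}{2}$, so $H^s$ is an algebra-like space for these purposes or at least admits the needed fractional Leibniz/Sobolev product estimates on $\T$); this is where the constant $(1+A)^{C_2}$ enters, via the assumed bounds $\|u_N(t)\|_{H^s}, \|u(t)\|_{H^s} \lesssim A$.

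Before that, though, I need to know that the full solution $u$ exists on $[0,T]$ in the first place, since \eqref{approx} presupposes it. This I would bootstrap from the same comparison: run the argument on the local existence interval $[0,T_{LWP}]$ where $T_{LWP}$ depends only on $A$ (via the local theory of \cite{ourpaper}, applicable since $s > \frac{1-\alpha}{2}$ is implied when we are in the regime of interest), obtain from \eqref{approx} on that short interval that $\|u(t)\|_{H^s}$ stays comparable to $A$ (say $\le 2A$) for $t$ small, hence $u$ can be continued, and iterate: on each step of length $T_{LWP}(A)$ the $H^{s'}$-distance grows by at most a multiplicative constant $e^{C(1+A)^{C_2} T_{LWP}}$, and since $u_N$ is assumed bounded by $A$ throughout $[0,T]$ and $v$ stays small in $H^{s'}$ (hence, after enough steps, one controls $\|u\|_{H^s} \le \|u_N\|_{H^s} + \|v\|_{H^s}$—here one must be slightly careful, as $v$ is only small in $H^{s'}$, not $H^s$; the standard fix is to also propagate an $H^s$-bound on $u$ directly from the local theory rather than from $v$). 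Chaining $T/T_{LWP}$ intervals and using $T_{LWP} \sim (1+A)^{-C}$ converts the iterated constant into $e^{C_1(1+A)^{C_2}T}$, which is exactly the claimed form, with the $N^{s'-s}$ factor coming out of the single tail term (it does not get amplified beyond the exponential because at each stage the inhomogeneous forcing contributes an additive $N^{s'-s}$ times a bounded factor, and summing a geometric series of such contributions still yields $e^{C_1(1+A)^{C_2}T} N^{s'-s}$).

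The main obstacle I anticipate is the Bourgain-space bookkeeping: the clean statement of local well-posedness in \cite{ourpaper} is in $X^{s,b}_T$, not in $C_t H^s$, so the Duhamel/Gronwall comparison has to be carried out at the level of $X^{s,b}$ (or $X^{s',b}$) norms, which do not obey a pointwise-in-$t$ Gronwall inequality directly; one instead partitions $[0,T]$ into $O(T/T_{LWP})$ subintervals, on each of which the $X^{s,b}$ local theory gives a contraction estimate, and tracks how the $H^{s'}$ error at the endpoint of one subinterval feeds in as initial data for the next. Making the constants uniform across this iteration — in particular checking that $b$ can be chosen so the time-localization factors $T^{\theta}$ are harmless and that the trilinear $X^{s,b}$ estimate degrades gracefully when one output index is lowered from $s$ to $s'$ — is the technically delicate part. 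Everything else (the tail estimate, the difference-of-cubics algebra, summing the geometric series) is routine once that framework is set up.
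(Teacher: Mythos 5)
Your overall framework is the same as the paper's: subtract the Duhamel formulas, split the nonlinear difference exactly as the paper does into a tail term and a difference of cubics, estimate with the trilinear $X^{s,b}$ bounds from the local theory (not pointwise-in-time product estimates -- your aside that $H^s$ is ``algebra-like'' is false for $s<\alpha-\tfrac12<\tfrac12$, but you correctly retreat to Bourgain-space estimates), absorb the difference term by taking the local time small depending on $A$, and chain $O(T/T_{LWP})$ intervals so the error at most multiplies by a fixed constant per step, yielding $e^{C_1(1+A)^{C_2}T}N^{s'-s}$.

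The genuine gap is in your treatment of the tail term $(1-P_N)(|u|^2u)$: bounding it by $N^{s'-s}\|u\|_{H^s}^3$ (or its $X^{s,b}$ analogue $N^{s'-s}\|u\|_{X^{s,b}}^3\lesssim N^{s'-s}A^3$) requires an $H^s$ bound on $u$ on the current subinterval, and the hypotheses give this only at $t=0$; the uniform bound $<A$ on $[0,T]$ is assumed for $u_N$, not for $u$, and the lemma's conclusion only controls $u$ in $H^{s'}$. Your proposed remedy -- ``propagate an $H^s$-bound on $u$ directly from the local theory'' -- does not close: if the time step is chosen from the $H^s$ norm, that norm can grow geometrically and the step shrinks, destroying the uniform $T/T_{LWP}$ count; if the step is kept uniform via the $H^{s'}$ norm (as it must be), then $\|u(t_j)\|_{H^s}$ at later steps is simply not controlled by $A$. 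The paper circumvents this: writing $v=P_{N/3}u$ so that $P_N(|v|^2v)=|v|^2v$, the tail becomes the difference $|u|^2u-|v|^2v$, which is estimated entirely in $X^{s',b}$ by $A^2\|P_{>N/3}u\|_{X^{s',b}}$, and the factor $N^{s'-s}$ comes from the high frequencies of the data (and, on later intervals, of $u_N$ plus the inductively small $H^{s'}$ error), never from an $H^s$ bound on $u$ itself. Equivalently one can place the tail on the truncated solution, $(1-P_N)(|u_N|^2u_N)$, whose $X^{s,-b'}$ norm is $\lesssim A^3$ by the standing hypothesis on $u_N$. Without one of these modifications, your estimate on the second and subsequent subintervals uses a quantity you have not controlled, so the iteration as written does not go through.
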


In the proof of this lemma, we use,

\begin{lemma}[see \cite{gin}] For $b,b'$ such that $0\leq b+b'<1$, $0\leq b'<1/2$, then we have $$\Big\|\int_0^t e^{-i(-\Delta)^\alpha(t-\tau)}f(\tau)d\tau\Big\|_{X_T^{s,b}}\lesssim T^{1-b-b'}\|f\|_{X_T^{s,-b'}},$$ for $T\in [0,1].$
\end{lemma}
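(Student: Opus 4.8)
The statement is the standard inhomogeneous (Duhamel) estimate in Bourgain spaces, so I would follow the classical Fourier-analytic argument of Ginibre--Ozawa--Velo. \textbf{Reduction to a scalar temporal estimate.} Since the Fourier multiplier $\langle n\rangle^s$ commutes with both the propagator $e^{-i(-\Delta)^\alpha(t-\tau)}$ and with integration in $\tau$, I would take $s=0$ without loss of generality. Next I would conjugate out the free evolution: writing $g(\tau)=e^{i\tau(-\Delta)^\alpha}f(\tau)$, one has $e^{it(-\Delta)^\alpha}\int_0^t e^{-i(-\Delta)^\alpha(t-\tau)}f(\tau)\,d\tau=\int_0^t g(\tau)\,d\tau$, while $\|f\|_{X^{0,-b'}}=\|g\|_{H^{-b'}_tL^2_x}$ by the very definition of the norm. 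Hence the claimed bound is equivalent to $\|\int_0^t g(\tau)\,d\tau\|_{H^b_tL^2_x}\lesssim T^{1-b-b'}\|g\|_{H^{-b'}_tL^2_x}$, and since the $L^2_x$ norm passes through the time integral, it suffices to prove the one-dimensional estimate $\|\int_0^t h\|_{H^b_t}\lesssim T^{1-b-b'}\|h\|_{H^{-b'}_t}$ for scalar functions $h(t)$, with restriction norms handled by choosing explicit extensions.

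\textbf{Fourier representation and frequency splitting.} To realize the left-hand restriction norm I would use the extension $\psi(t/T)\int_0^t h$, where $\psi\in C_c^\infty$ equals $1$ on $[-1,1]$ and is supported in $[-2,2]$ (so it agrees with the antiderivative on $[0,T]$ for $T\le1$), together with a near-optimal extension of the data. The engine of the argument is the identity
\[
\int_0^t h(\tau)\,d\tau=\int_{\R}\frac{e^{it\lambda}-1}{i\lambda}\,\widehat h(\lambda)\,d\lambda,
\]
which I would split at the threshold $|\lambda|=1/T$; this threshold is precisely what converts the two hypotheses into the desired power of $T$.

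\textbf{The two regimes.} For $|\lambda|>1/T$ I would write $\frac{e^{it\lambda}-1}{i\lambda}=\frac{e^{it\lambda}}{i\lambda}-\frac{1}{i\lambda}$. The constant piece contributes $c\,\psi(t/T)$ with $|c|\lesssim(\int_{|\lambda|>1/T}\langle\lambda\rangle^{2b'}\lambda^{-2}\,d\lambda)^{1/2}\|h\|_{H^{-b'}}\lesssim T^{1/2-b'}\|h\|_{H^{-b'}}$ by Cauchy--Schwarz, where convergence of the $\lambda$-integral forces $b'<1/2$; since $\|\psi(\cdot/T)\|_{H^b}\lesssim T^{1/2-b}$ for $0\le b<1$, this term is $\lesssim T^{1-b-b'}\|h\|_{H^{-b'}}$. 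The oscillatory piece has Fourier support in $|\lambda|>1/T$ and $H^b_t$ norm bounded by $\sup_{|\lambda|>1/T}\langle\lambda\rangle^{b+b'}|\lambda|^{-1}\lesssim T^{1-b-b'}$, where the supremum sits at the lower endpoint exactly because $b+b'<1$. For $|\lambda|\le1/T$ the factor $1/\lambda$ cannot be separated, so I would Taylor expand $\frac{e^{it\lambda}-1}{i\lambda}=\sum_{k\ge1}\frac{(i\lambda)^{k-1}t^k}{k!}$, producing $\sum_{k\ge1}\frac{c_k}{k!}\psi(t/T)t^k$ with $c_k=\int_{|\lambda|\le1/T}(i\lambda)^{k-1}\widehat h(\lambda)\,d\lambda$. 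A direct computation gives $|c_k|\lesssim T^{1/2-k-b'}\|h\|_{H^{-b'}}$ and $\|\psi(\cdot/T)t^k\|_{H^b}\lesssim T^{k+1/2-b}\|r^k\psi(r)\|_{H^b}$, so \emph{each} term of the series carries exactly the factor $T^{1-b-b'}$, and the series converges since $\sum_k\|r^k\psi(r)\|_{H^b}/k!<\infty$.

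\textbf{Main obstacle.} The genuinely delicate point is not the splitting but the interaction of the sharp cutoff $\psi(t/T)$ with the oscillatory high-frequency term: naive multiplication by $\psi(\cdot/T)$ on $H^b_t$ threatens to cost a factor $T^{-b}$ and destroy the gain. I would avoid this by keeping the multiplier structure intact---computing the Fourier transform of $\psi(t/T)\,e^{it\lambda}$ as $T\widehat\psi(T(\mu-\lambda))$ and estimating the resulting double integral by Schur's test or Minkowski---rather than by an $H^b$ product rule. A cleaner but more bookkeeping-heavy alternative is to establish the unweighted ($T=1$) estimate on the whole line and then recover the power $T^{1-b-b'}$ by a scaling argument in $t$; the direct computation above, however, exhibits the roles of $b'<1/2$ and $b+b'<1$ most transparently.
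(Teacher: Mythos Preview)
The paper does not give its own proof of this lemma: it is stated with the attribution ``[see \cite{gin}]'' and used as a black box in the proof of Lemma~\ref{convergence}. Your proposal is a correct and detailed outline of precisely the Ginibre--Ozawa--Velo argument that the citation points to---conjugating out the free evolution, splitting the Duhamel kernel $(e^{it\lambda}-1)/(i\lambda)$ at the threshold $|\lambda|=1/T$, Taylor-expanding the low-frequency part, and handling the cutoff $\psi(\cdot/T)$ carefully on the high-frequency oscillatory piece---so there is nothing to compare: you have supplied the proof the paper elected to omit.
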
 

\begin{proof}{(Proof of Lemma \eqref{convergence})}
$$
u(t)-u_N(t)=e^{-it(-\Delta)^\alpha}(u_0-P_N u_0)+i\gamma\int\limits_0^te^{-i(t-\tau)(-\Delta)^\alpha}\big( |u|^2u(\tau)-P_N(|u^N|^2u^N)(\tau) \big)d\tau,
$$
and, taking the $ L^\infty( [0,T];H^{s'})$ norms of both sides for $b>\frac{1}{2}$, since $X^{s',b}\subset L^\infty( [0,T], H^{s'})$ for $b>\frac{1}{2}$, we get,
\begin{eqnarray*}
\|u-u_N\|_{ L^\infty( [0,T], H^{s'})}&\leq&\big\|u_0-P_Nu_{0}\big\|_{H^{s'}}+\big\|\int\limits_0^te^{-i(t-\tau)(-\Delta)^\alpha}\big( |u|^2u(\tau)-P_N(|u^N|^2u^N)(\tau) \big)d\tau \big\|_{X^{s',b}}\\
&\leq&\|u_0-P_Nu_{0}\|_{H^{s'}}+(T_{LWP})^{1-b-b'}\big\||u|^2u-P_N|u^N|^2u^N \big\|_{X^{s',-b'}}\\
&\leq&(T_{LWP})^{1-b-b'}\Big(\big\||u|^2u-P_N(|u|^2u)\big \|_{X^{s',-b'}}+\big\|P_N\big(|u|^2u-|u^N|^2u^N\big) \big\|_{X^{s',-b'}}\Big)\\
&&\qquad \qquad+ \|u_0-u_{0,N}\|_{H^{s'}}\\
&\leq&I+II+III,
\end{eqnarray*}
 for $b'<\frac{1}{2}$ such that $b+b'<1$.

The term $III$ is easier to estimate,

\begin{equation*}
III= \big\|\sum_{|n|>N}e^{inx}\widehat{(u_0)}_n\big\|_{H^{s'}}\leq  N^{s'-s}\|u_0\|_{H^s}\leq N^{s'-s}A.
\end{equation*}

For the term $I$, we first observe that $P_N\big( |v|^2v \big)= |v|^2v $ for $v=P_{\frac{N}{3}}u$, from the convolution property of frequency restriction. Then we write,
\begin{eqnarray*}
I&\leq&\big\||u|^2u-P_N(|v|^2v) \big\|_{X^{s',b'}}+\big\|P_N(|v|^2v-|u|^2u)\big \|_{X^{s',-b'}}\\
&=&\big\||u|^2u-|v|^2v \big\|_{X^{s',b'}}+\big\|P_N(|v|^2v-|u|^2u) \big\|_{X^{s',-b'}}\\
&=&I_1+I_2\leq 2 I_1,
\end{eqnarray*}

Estimating term $I_ 1$ using $X^{s,b}$ estimates and local well-posedness theory, see \cite{ourpaper}, we see that,

\begin{eqnarray*}
I_1&\lesssim&  (T_{LWP})^{1-b-b'} \big(\|u\|_{X^{s',b}}+\|v\|_{X^{s',b}} \big)^2\|u-v\|_{X^{s',b}}\\
&\lesssim& (T_{LWP})^{1-b-b'}A^2\|u-P_{\frac{N}{3}}u\|_{X^{s',b}}\\
&\lesssim& (T_{LWP})^{1-b-b'}A^2\|u_0-P_{\frac{N}{3}}u_0\|_{H^{s'}}\\
&\lesssim& (T_{LWP})^{1-b-b'}A^3N^{s'-s}.
\end{eqnarray*}

Thus we get,
\begin{equation*}
I\lesssim  (T_{LWP})^{1-b-b'}A^3N^{s'-s}.
\end{equation*}
Similarly, for the second term we have,
\begin{equation*}
II\lesssim (T_{LWP})^{1-b-b'} \big(\|u\|_{X^{s',b}}+\|u^N\|_{X^{s',b}} \big)^2\|u-u^N\|_{X^{s',b}}\lesssim  (T_{LWP})^{1-b-b'} A^2\|u-u^N\|_{X^{s',b}},
\end{equation*}

and collecting all the terms, we get,
\begin{eqnarray*}
\|u-u_N\|_{X^{s',b}}&\leq& C N^{s'-s}A+C_2 (T_{LWP})^{1-b-b'} A^2\|u-u^N\|_{X^{s',b}}\\
&&\qquad\qquad+ C_1(T_{LWP})^{1-b-b'}A^3N^{s'-s},\\
&\leq& CAN^{s'-s}+\frac{1}{2}\|u-u^N\|_{X^{s',b}}\\
&\leq&2CAN^{s'-s}.
\end{eqnarray*}
for $T_{LWP}$ small enough independent of $N,s$ and $s'$. Repeating this argument, since the implicit constant $C$ can be taken independent of $T_{LWP}$ and $N$, we see that at any $T_{LWP}$ time the norm at most doubles and thus, at time $T$ we get,
$$
\|u-u_N\|_{H^{s'}}\lesssim 2^{\frac{T}{T_{LWP}}}CAN^{s'-s}\approx e^{C'(1+A)^{\delta}T}AN^{s'-s},
$$
which gives the result.
\end{proof}
 Now, we define a probability measure on $H^s$ using the Hamiltonian. For that we will mainly follow Zhidkov's arguments.

   \subsection{Construction of the Measure on $H^s$:}

  First we will fix the notation that we will use for the rest of the paper. Let $F=(-\Delta)^{\alpha-s}$ on $H^s$. We see that $F$ has the orthonormal eigenfunctions $e_n=e^{inx}/\langle n \rangle^{{s}}$ in $H^s$ with the eigenvalues $|n|^{2\alpha-2s}$. We also denote $u_n=(u,e_n)_{H^s}$.

  \begin{defin} 

 A set $M \subset H^s$ is called cylindrical if there exists an integer $k\geq 1$ such that,
$$M=\{ u\in H^s :[u_{-k},\ldots, u_{-2},u_{-1}, u_1, u_2,\ldots , u_k ]\in D \},$$
for a Borel set $D\subset \mathbb{R}^{2k}$.

  \end{defin}

 We denote by $\mathcal{A}$, the algebra containing all such cylindrical sets. Then we define the additive normalized measure $w$ on the algebra $\mathcal{A}$ as follows: For $M\subset \mathcal{A}$, cylindrical,
$$
 w(M)=(2\pi)^{-{k}}\prod_{|n|=1}^k |n|^{\alpha-s}\int_D e^{-\frac{1}{2}\sum_{n=1}^{k}|n|^{2\alpha-2s}|u_n|^2}\prod_{|n|=1}^k du_n.
$$

By the definition of the cylindrical sets, we see that the minimal $\sigma-algebra$ $\overline{\mathcal{A}}$ containing $\mathcal{A}$ is the Borel $\sigma-algebra$, see \cite{zhid}. Although the measure is additive by definition, it doesn't necessarily follow that it is countably additive. Indeed,

\begin{theorem}\label{cont-add}
The Gaussian measure $w$ is countably additive on $\mathcal{A}$ if and only if $\sum_n |n|^{2s-2\alpha}<\infty$, i.e. $s<\alpha-\frac{1}{2}$.
\end{theorem}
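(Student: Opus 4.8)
The plan is to characterize countable additivity of $w$ via a classical criterion for additive measures on cylindrical algebras: an additive normalized measure on $\mathcal{A}$ extends to a countably additive measure on $\overline{\mathcal{A}}$ if and only if it is \emph{tight}, i.e. for every $\delta>0$ there is a compact set $K\subset H^s$ with $w(K)>1-\delta$; equivalently (following Zhidkov \cite{zhid}), one checks continuity at $\emptyset$ along decreasing sequences of cylindrical sets. Since the coordinates $u_n=(u,e_n)_{H^s}$ are, under $w$, independent real Gaussians with $\mathbb{E}|u_n|^2 = |n|^{2s-2\alpha}$ (read off from the density defining $w(M)$), the natural compact sets to use are closed balls of the Hilbert--Schmidt-type operator $F^{-1/2}=(-\Delta)^{(s-\alpha)/2}$, or more concretely ellipsoids $\{u:\sum_n a_n^2 |u_n|^2 \le R\}$ with $a_n\to\infty$ slowly; such ellipsoids are compact in $H^s$ precisely because they are bounded and have tails that can be made uniformly small.

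The key computation is then a moment estimate. First I would reduce to showing that $\sum_n |u_n|^2 < \infty$ $w$-almost surely is equivalent to $\sum_n \mathbb{E}|u_n|^2 = \sum_n |n|^{2s-2\alpha} < \infty$. For the ``if'' direction: if $\sum_n |n|^{2s-2\alpha}<\infty$, then by the monotone convergence theorem $\mathbb{E}\sum_n |u_n|^2 = \sum_n |n|^{2s-2\alpha}<\infty$, so $\sum_n |u_n|^2<\infty$ a.s.; a Chebyshev/Markov argument on the tails $\sum_{|n|>R}|u_n|^2$ then produces, for each $\delta$, an ellipsoid of the form $K = \{u : \sum_n b_n |u_n|^2 \le C_\delta\}$ with a suitable $b_n \to \infty$ and $\sum_n b_n |n|^{2s-2\alpha} < \infty$, which is compact in $H^s$ and has $w$-measure $>1-\delta$; this gives tightness and hence countable additivity of the (Gaussian) measure, matching Zhidkov's construction. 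For the ``only if'' direction: if $\sum_n |n|^{2s-2\alpha} = \infty$, I would show $w$ cannot be countably additive by exhibiting a decreasing sequence of cylindrical sets with empty intersection but $w$-measures bounded below. Concretely, take $M_k = \{u : \sum_{|n|\le k} |u_n|^2 \le R_k\}$ for $R_k$ chosen so that each $w(M_k) \ge c>0$ (possible since the $\chi^2$-type sum $\sum_{|n|\le k}|u_n|^2$ has mean $\sum_{|n|\le k}|n|^{2s-2\alpha}\to\infty$, but one can still keep a fixed fraction of mass below a growing threshold). Then $\bigcap_k M_k \subseteq \{u : \sum_n |u_n|^2 < \infty\}$, which is $w$-``null'' in the finitely-additive sense forced by divergence, so one gets a contradiction with countable additivity; the cleanest route is Kolmogorov's three-series / zero-one law applied to the formally-defined i.i.d.-like coordinates, concluding $\sum_n |u_n|^2 = \infty$ ``with probability one'' whenever it diverges in mean, which is incompatible with $w$ being a genuine probability measure on $H^s$.

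Finally I would translate the analytic condition into the stated Sobolev range: $\sum_n |n|^{2s-2\alpha} < \infty$ (sum over $n\in\mathbb{Z}\setminus\{0\}$, which is what appears in the density) holds iff $2\alpha - 2s > 1$, i.e. $s < \alpha - \tfrac12$, giving the dichotomy exactly as stated.

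The main obstacle I anticipate is the ``only if'' direction done rigorously: the measure $w$ is only \emph{finitely} additive a priori on $\mathcal{A}$, so one cannot directly invoke a.s. statements or Borel--Cantelli until the measure is known to be countably additive. The clean way around this is to argue purely at the level of cylindrical sets --- construct the explicit decreasing sequence $M_k$ with $\inf_k w(M_k) > 0 = w(\bigcap_k M_k)$ using only the finite-dimensional Gaussian densities --- rather than appealing to the infinite-dimensional probability space, and to invoke the general theorem (Prokhorov/Zhidkov) that a finitely additive normalized measure on a cylindrical algebra is countably additive iff it is continuous at $\emptyset$ iff it is tight. Making the tightness/compactness step quantitative (choosing the weights $b_n$ and thresholds correctly so that the ellipsoids are genuinely compact in $H^s$ and carry almost full mass) is the other place where care is needed, but it is routine once the moment bounds are in hand.
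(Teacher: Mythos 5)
Your ``if'' direction is essentially the paper's argument: a Chebyshev estimate on a weighted sum $\sum_n b_n|u_n|^2$ with $b_n\to\infty$ chosen so that $\sum_n b_n|n|^{2s-2\alpha}<\infty$ produces compact ellipsoids in $H^s$ of almost full measure, and tightness/continuity at $\emptyset$ (the Zhidkov criterion, which the paper proves directly via the nested closed-cylinder-plus-compact-set argument) then yields countable additivity; no objection there. For the ``only if'' direction, however, your concrete plan fails. The sets $M_k=\{u:\sum_{|n|\le k}|u_n|^2\le R_k\}$ are not a decreasing sequence with empty intersection: $u=0$ (and any fixed $u$ of small norm) lies in every $M_k$. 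More importantly, the set you want to be negligible, $\{u\in H^s:\sum_n|u_n|^2<\infty\}$, is all of $H^s$: the $u_n=(u,e_n)_{H^s}$ are coordinates with respect to an orthonormal system of $H^s$, so $\sum_n|u_n|^2\le\|u\|_{H^s}^2<\infty$ for every $u\in H^s$. Hence the claim that this set is ``$w$-null in the finitely-additive sense'' is not meaningful, and no contradiction comes from that containment. Relatedly, your worry about not being allowed to use almost-sure arguments is misplaced in this direction: to prove ``countably additive $\Rightarrow$ convergence'' by contradiction you assume countable additivity, extend $w$ to the Borel $\sigma$-algebra, and then the full probabilistic machinery is available; it is precisely the ``safer'' cylindrical-set construction you retreat to that is broken.

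Your fallback route is fine and would complete the proof: assuming countable additivity, the coordinates are independent centered Gaussians with variances $|n|^{2s-2\alpha}$, and Kolmogorov's zero-one law together with the (converse of the) three-series theorem forces $\sum_n|u_n|^2=\infty$ almost surely when $\sum_n|n|^{2s-2\alpha}=\infty$, contradicting that this sum is finite at every point of $H^s$. But you should state the criterion with truncated means, i.e. $\sum_n \mathbb{E}\big[\min(|u_n|^2,1)\big]=\infty$ (which for Gaussians does follow from divergence of $\sum_n|n|^{2s-2\alpha}$), rather than ``diverges in mean'', which is not sufficient for general nonnegative independent summands. Note that this probabilistic argument is genuinely different from, and more uniform than, the paper's converse, which argues by hand in two cases: for $s\le\alpha$ a second-moment concentration estimate shows that balls of arbitrarily large radius have $w$-measure at most $\tfrac12$, contradicting $w(H^s)=1$ under countable additivity; for $s>\alpha$ it covers $H^s$ by countably many cylindrical boxes of total measure at most $\tfrac12$. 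As written, though, the explicit decreasing-sets construction is a genuine gap, and the converse stands only once the three-series route is made precise.
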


\begin{proof}{(cf. \cite{zhid})}
Let $\sum_n |n|^{2s-2\alpha}<\infty$. Wel first show that for any $\epsilon>0$, there exists a compact set $K_\epsilon\subset H^s$ with $w(M)<\epsilon$ for any cylindrical set $M$ such that $M\cap K_\epsilon=\emptyset$.

Let $b_n=|n|^{\tilde \epsilon}$ such that $a=\sum_n |n|^{2s-2\alpha+\tilde \epsilon}<\infty$. Then for an arbitrary $R>0$ take the cylindrical sets of the form,
$$
M=\Big\{ u\in H^s: [u_{-k},\ldots,u_{-2},u_{-1}, u_1,\ldots, u_k]\in D \text{,\quad where }\ \sum_{|n|=1}^k |n|^{\tilde \epsilon}u_n^2>R^2\Big\}.
$$

 Then we see that,
\begin{eqnarray*}
w(M)&=&(2\pi)^{-{k}}\prod_{|n|=1}^k |n|^{\alpha-s}\int_{\sum_{n=1}^k |n|^{\tilde \epsilon}u_n^2>R^2} e^{-\frac{1}{2}\sum_{n=1}^{k}|n|^{2\alpha-2s}|u_n|^2} \prod_{|n|=1}^k du_n\\
&\leq&(2\pi)^{-{k}}\prod_{|n|=1}^k |n|^{\alpha-s}\int_{\mathbb{R}^n}\sum_{n=1}^k \big(\frac{|n|^{\tilde \epsilon}}{R^2}u_n^2\big) e^{-\frac{1}{2}\sum_{n=1}^{k}|n|^{2\alpha-2s}|u_n|^2} \prod_{|n|=1}^k du_n\\
&\leq& R^{-2}\sum_n |n|^{2s-2\alpha+\tilde \epsilon}\\
&=&aR^{-2},
\end{eqnarray*}
here, to pass to the third line we used integration by parts with $f=\frac{-u_n}{|n|^{2\alpha-2s}}$ and $dg=-|n|^{2\alpha-2s}u_ne^{-\frac{1}{2}|n|^{2\alpha-2s}u_n^2}du_n$. Then, for $R>\sqrt{\frac{a}{\epsilon}}$, we have $w(M)<\epsilon$.

  Hence, if we take $K_\epsilon=\{ u\in H^s : \sum_n |n|^{\tilde{\epsilon}} u_n^2\leq R^2\}$, we get the desired compact set.

Now let $A_1\supset A_2\supset \ldots \supset A_m\supset \ldots$ be a sequence of cylindrical sets in $H^s$ such that $\bigcap_{m=1}^\infty A_m=\emptyset$. Then for any $\epsilon>0$ there exists closed cylindrical sets $C_m\subset A_m$ for all $m$ such that $w(A_m/C_m)<\epsilon 2^{-m-2}$. Let $D_m=\bigcap_{k=1}^m C_k$. Then $w(A_m/D_m)\leq w(\bigcup_{k=1}^m (A_k/C_k))<\epsilon/2$. Let $E_m=D_m\cap K_{\epsilon/2}$, then $E_m$'s are compact with $E_m\subset A_m$ and $w(A_m/E_m)<\epsilon$. Since $\bigcap_m A_m=\emptyset$, $\bigcap_m E_m=\emptyset$, and since $(E_m)$ is a nested sequence of compact sets, we see that $E_m=\emptyset$ for all $m>m_0$ for some $m_0\in \mathbb{N}$.

Hence, $w(A_m)<w(E_m)+\epsilon<\epsilon$, for all $m>m_0$. Thus $w(A_m)\rightarrow 0$, i.e. $w$ is countably additive.

For the converse, assume $w$ is countably additive and also $\sum_n |n|^{2s-2\alpha}=\infty$, i.e. $s\geq \alpha-\frac{1}{2}$. Then consider two cases,

\textbf{Case 1: $(s\leq \alpha)$.} In this case we see that $|n|^{2s-2\alpha}\leq 1$ for any $n$. Consider the cylindrical sets of the form,
$$
M_k=\Big\{ u\in H^s : \big| \sum_{|n|=1}^k (u_n^2)-\lambda_k \big|< 2\sqrt {\lambda_k} \Big\},
$$
where $\lambda_k=\sum_{|n|=1}^k |n|^{2s-2\alpha} $.

Then we have,
\begin{eqnarray*}
w(M_k^c)&=&w(\Big\{ u\in H^s : \big| \sum_{|n|=1}^k (u_n^2)-\lambda_k) \big|\geq 2\sqrt {\lambda_k} \Big\})\\
&\leq&\int_{\mathbb{R}^{2n}}\frac{\big(\sum_{|n|=1}^k (u_n^2)-\lambda_k\big)^2}{4\lambda_k} e^{-\frac{1}{2}\sum_{|n|=1}^{k}|n|^{2\alpha-2s}|u_n|^2} \prod_{|n|=1}^k du_n\\
&=&\frac{1}{4\lambda_k}\int_{\mathbb{R}^{2n}}\Big(\big(\sum_{|n|=1}^k u_n^2\big)^2-2\big(\sum_{|n|=1}^k u_n^2\big)\lambda_k+\lambda_k^2\Big) e^{-\frac{1}{2}\sum_{|n|=1}^{k}|n|^{2\alpha-2s}|u_n|^2} \prod_{|n|=1}^k du_n\\
&=&\frac{1}{4\lambda_k}\Big( \big( \lambda_k^2 + 2 \sum_{|n|=1}^k |n|^{4s-4\alpha}\big)-2\lambda_k.\lambda_k+\lambda_k^2 \Big)\\
&\leq&\frac{1}{2}\frac{ \sum_{|n|=1}^k |n|^{4s-4\alpha}}{\lambda_k}\\
&\leq&\frac{1}{2},
\end{eqnarray*}

where, to pass from the third line to the fourth line we used integration by parts again. Since $\lambda_k\rightarrow \infty$ as $k\rightarrow \infty$, there exist balls $B_{\lambda_k-2\sqrt{\lambda_k}}(0)$ of arbitrarily large radii with $w(B_{\lambda_k-2\sqrt{\lambda_k}}(0))\leq w(M_k^c)\leq \frac{1}{2}$, which contradicts with the countably additivity of $w$.

\textbf{Case 2: $(s>\alpha)$.} In this case, for each $n\geq 1$, consider the cylindrical set
$$
M_k=\{ u\in H^s : |u_i|\leq k,\ \  |i|=1,2,...,a_k\},
$$
where $a_k>0$ is an integer. Then by a change of variables, we have,
\begin{eqnarray*}
w(M_k)&=& (2\pi)^{-{a_k}}\prod_{|n|=1}^{a_k} \Big(\int_{-k|n|^{\alpha-s}}^{k|n|^{\alpha-s}}e^{-\frac{1}{2}|u_n|^2} du_n\Big)\\
&\leq&\Big[  (2\pi)^{-{1}}\int_{-k}^{k}e^{-\frac{1}{2}|x|^2} dx \Big]^{a_k},
\end{eqnarray*}
since $s>\alpha$. By choosing $a_k$ large enough, we can take $w(M_k)\leq 2^{-k-1}$ for each $k$ and that $a_k\rightarrow \infty$ as $k\rightarrow \infty$. Then since $\bigcup_{k=1}^\infty M_k=H^s$ and $w(H^s)=1$, since $H^s$ is a cylindrical set with full measure. But then $w(\bigcup_{k=1}^\infty M_k)\leq \sum_{k=1}^\infty w(M_k)\leq \frac{1}{2}$, which is a contradiction. Hence the theorem follows.
\end{proof}

 Now we define the sequence of finite dimensional measures $(w_k)$ as follows: For any fixed $k\geq 1$, we take the $\sigma-algebra$, $\mathcal{A}_n$, of cylindrical sets in $H^s$ of the form $M_k=\{ u\in H^s : [u_{-k},\ldots, u_{-2}, u_{-1},u_1,\ldots, u_k]\in D \}$, for some Borel set $D\subset \mathbb{R}^{2k}$. Then,
$$
w_k(M_k)=(2\pi)^{-{k}}\prod_{|n|=1}^k |n|^{\alpha-s}\int_D e^{-\frac{1}{2}\sum_{|n|=1}^{k}|n|^{2\alpha-2s}|u_n|^2} \prod_{|n|=1}^k du_n.
$$
Hence we get the sequence of finite dimensional, countably additive measures $w_k$ on the $\sigma-algebra$ $\mathcal{A}_k$. We can also extend these measures to the $\sigma-algebra$ $\overline{\mathcal{A}}$ in $H^s$, by setting,
$$
w_k(A)=w_k(A\cap H^s_k),\ \ \text{for } A\in \overline{\mathcal{A}},
$$
where $H^s_k=span(e_{-k}, \ldots, e_{-1}, e_1,\ldots, e_k)$, since $A\cap H^s_k$ is a Borel subset of $H^s_k$ for $A\in \overline{\mathcal{A}}$, see \cite{zhid}.
 
The immediate question is whether or not the infinite dimensional Gaussian measure $w$ and the finite measures $w_k$ are related and the answer is,

\begin{prop}
The sequence $w_k$ converge weakly to the measure $w$ on $H^s$ for $s<\alpha-\frac{1}{2}$  as $k\rightarrow \infty$.
\end{prop}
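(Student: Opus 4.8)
The plan is to follow the classical scheme (as in \cite{zhid}) in three moves: tightness of the family $(w_k)$, Prokhorov compactness, and identification of every subsequential weak limit with $w$ through agreement on the generating algebra $\mathcal{A}$ of cylindrical sets. Throughout we exploit the hypothesis $s<\alpha-\frac12$ twice: it lets us pick $\tilde\epsilon>0$ with $a:=\sum_n|n|^{2s-2\alpha+\tilde\epsilon}<\infty$, and, via Theorem \ref{cont-add}, it guarantees that $w$ is a genuine (countably additive) Borel probability measure, so that the limit we produce really is $w$.

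For tightness, I would note that the compact sets $K_\epsilon=\{u\in H^s:\sum_n|n|^{\tilde\epsilon}u_n^2\le R^2\}$ already constructed in the proof of Theorem \ref{cont-add} serve every $w_k$ at once: the integration-by-parts estimate $w(M)\le aR^{-2}$ used there is dimension-free, hence it gives $w_k(K_\epsilon^c)\le aR^{-2}$ for all $k$, and choosing $R>\sqrt{a/\epsilon}$ gives $\sup_k w_k(K_\epsilon^c)<\epsilon$. Since $H^s$ is a separable Hilbert space, Prokhorov's theorem then makes $(w_k)$ relatively compact for weak convergence, so any subsequence has a further subsequence $w_{k_j}$ converging weakly to some Borel probability measure $\nu$.

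The heart of the argument is to check that $\nu=w$, and for this the key remark is that the finite-dimensional marginals of the $w_k$ stabilize and match those of $w$. Fix $m$; for any $K\ge m$, integrating out the modes $u_n$ with $m<|n|\le K$ in the product Gaussian density defining $w_K$ shows that the push-forward of $w_K$ under $u\mapsto[u_{-m},\ldots,u_{-1},u_1,\ldots,u_m]$ is exactly the $2m$-dimensional Gaussian $(2\pi)^{-m}\prod_{|n|=1}^m|n|^{\alpha-s}\,e^{-\frac12\sum_{|n|=1}^m|n|^{2\alpha-2s}u_n^2}\prod_{|n|=1}^m du_n$, with no dependence on $K$ --- and this is by definition the corresponding marginal of $w$. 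Consequently, for every bounded continuous $f$ on $H^s$ that depends only on $[u_{-m},\ldots,u_m]$ (the coordinate functionals $u\mapsto u_n=(u,e_n)_{H^s}$ being continuous on $H^s$), $\int f\,dw_{k_j}$ is constant in $j$ once $k_j\ge m$ and equals $\int f\,dw$; letting $j\to\infty$ and using weak convergence yields $\int f\,d\nu=\int f\,dw$. Such cylindrical functions form an algebra of bounded continuous functions, containing the constants and generating the Borel $\sigma$-algebra of $H^s$, so a functional monotone-class argument upgrades the identity to all bounded continuous $f$, whence $\nu=w$. As every subsequence of $(w_k)$ thus has a further subsequence converging weakly to $w$, and weak convergence of probability measures on a Polish space is metrizable, we conclude $w_k\rightharpoonup w$.

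The only genuinely substantive step is the last identification, and it is precisely there that countable additivity of $w$ (hence the hypothesis $s<\alpha-\frac12$) is indispensable: without it, agreement on the algebra $\mathcal{A}$ would not single out $\nu$. If one wishes to bypass Prokhorov entirely, the same ingredients give a direct proof that $\int f\,dw_k\to\int f\,dw$ for all bounded Lipschitz $f$: approximate $f$ by $f\circ P_m$, where $P_m$ is the orthogonal projection of $H^s$ onto $H^s_m$, note that $f\circ P_m\to f$ uniformly on each $K_\epsilon$ because $\sum_{|n|>m}u_n^2\le m^{-\tilde\epsilon}R^2$ there, and combine the uniform tail bound $\sup_k w_k(K_\epsilon^c)<\epsilon$ with the exact identity $\int (f\circ P_m)\,dw_k=\int(f\circ P_m)\,dw$ (valid for $k\ge m$) in a routine $3\epsilon$-argument.
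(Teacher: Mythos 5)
Your argument is correct, but it is organized differently from the paper's. The paper proves weak convergence directly: for a bounded continuous $\phi$ it uses the uniform tightness $w_m(K_\epsilon)>1-\epsilon$ (the same dimension-free Chebyshev/integration-by-parts bound you invoke), a uniform-continuity statement for $\phi$ near the compact set $K_\epsilon$, a thickened set $K_{m,\epsilon}$, and the factorization $w=w_m\otimes w_m^\perp$ to show $\int\phi\,dw_m$ and $\int\phi\,dw$ differ by $O(\epsilon B)$; no compactness theorem for measures is used. Your primary route instead goes through Prokhorov: uniform tightness on the compacts $K_\epsilon$ gives relative weak compactness, and you identify every subsequential limit with $w$ by the stabilization of finite-dimensional marginals (which, as you note, is exact because the normalizing constants make the integrated-out Gaussian factors equal $1$) together with a monotone-class/$\pi$--$\lambda$ argument and the sub-subsequence principle. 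Both proofs rest on the same quantitative ingredient --- the bound $w_k(K_\epsilon^c)\le aR^{-2}$ uniformly in $k$, which is where $s<\alpha-\tfrac12$ enters --- and on the product structure of the Gaussian over Fourier modes; what yours buys is modularity and softness (the only estimate is tightness, the identification step is purely structural), at the cost of invoking Prokhorov's theorem, metrizability of weak convergence on a Polish space, and the measure-theoretic generation argument, all of which the paper's hands-on $3\epsilon$ computation avoids. Your closing ``bypass Prokhorov'' sketch (approximate $f$ by $f\circ P_m$, use the exact identity $\int f\circ P_m\,dw_k=\int f\circ P_m\,dw$ for $k\ge m$, and the uniform tail bound on $K_\epsilon$) is essentially the paper's own proof, with the uniform estimate $\|u-P_mu\|_{H^s}\le m^{-\tilde\epsilon/2}R$ on $K_\epsilon$ playing the role of the paper's set $K_{m,\epsilon}$ and condition \eqref{uni-cont}; the only small point to make explicit there is the standard reduction of weak convergence to bounded Lipschitz test functions, or else an appeal to uniform continuity of a general bounded continuous $\phi$ near the compact $K_\epsilon$ as in the paper.
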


\begin{proof}{(cf. \cite{zhid})}
First, recall that a sequence of measures $\upsilon_m$ is said to converge to a measure $\upsilon$ weakly on $H^s$ if and only if for any continuous bounded functional $\phi$ on $H^s$, 
$$\int \phi(u)d\upsilon_m(u)\rightarrow \int \phi(u)d\upsilon(u).$$
Also recall that any $\epsilon>0$, if we take $K_\epsilon\subset H^s$ as in the Theorem \eqref{cont-add}, we see that $w(K_\epsilon)>1-\epsilon$ and moreover, $w_m(K_\epsilon)>1-\epsilon$ for all $n\geq 1$. Now let $\phi$ be an arbitrary continuous bounded functional on $H^s$ with $B=\sup_{u\in H^s}\phi(u)$. Then for any $\epsilon>0$ there exists $\delta=\delta(\epsilon)>0$ such that 
\begin{equation}\label{uni-cont}
|\phi(u)-\phi(v)|<\epsilon\ \text{ for any }\ u\in K_\epsilon\text{ and }\ v\in H^s \text{ satisfying }\ \|u-v\|_{H^s}<\delta.
\end{equation}
For any $m$, call $K_m=K_\epsilon\cap H_m^s$. Then by the definition of the measures $w_m$ on $\overline{\mathcal{A}}$, we see that,
\begin{equation}\label{first}
\Big| \int_{H^s} \phi(u)dw_m(u)-\int_{K_m} \phi(u)dw_m(u) \Big|<\epsilon B,
\end{equation}
for any $m\geq1$. Define,
$$
K_{m,\epsilon}=\{ v\in H^s : v=v_1+v_2,\ \  v_1\in H^s_m,\ \  v_2^\perp\in H^s_m,\ \  \|v_2\|_{H^s}<\frac{\delta}{2},\ \   dist(v_1, K_m)<\frac{\delta}{2} \}.
$$
Then, $K_\epsilon\subset K_{m,\epsilon}$ for all sufficiently large $m$'s. Thus, for $m$ large enough,
\begin{equation}\label{second}
\Big| \int_{H^s} \phi(u)dw_m(u)-\int_{K_{m,\epsilon}} \phi(u)dw_m(u) \Big|<\epsilon B.
\end{equation}
We now define the measure $w_m^\perp$ on $(H^s_m)^\perp$ as follows:

For a cylindrical set 
$$
M^\perp= \{ u\in (H^s_m)^\perp: [u_{-m-k},\ldots, u_{-m-2}, u_{-m-1}, u_{m+1},u_{m+2}, \ldots, u_{m+k}]\in F \},
$$
where $F\subset \mathbb{R}^{2k}$ is a Borel set, and,
$$
w_m^\perp(M^\perp)=(2\pi)^{-{k}}\prod_{|n|=m+1}^{m+k} |n|^{\alpha-s}\int_F e^{-\frac{1}{2}\sum_{|n|=m+1}^{m+k}|n|^{2\alpha-2s}|u_n|^2}\prod_{|n|=m+1}^{m+k}du_n .
$$
then we see that $w_m^\perp$ is a probability measure on $(H^s_m)^\perp$ and $w=w_m\otimes w_m^\perp$.

Thus we get,
\begin{equation}
\int_{K_{m,\epsilon}}\phi(u)dw(u)=\int_{u_m\in K_{m,\epsilon}}dw_m(u_m)\int_{u_m^\perp \in K_{m,\epsilon}^\perp(u_m)}\phi(u_m+u_m^\perp)dw_m^\perp(u_m^\perp),
\end{equation}
where, $K_{m,\epsilon}^\perp(u_m)= K_{m,\epsilon}\cap \{ u\in H^s : u=u_m+y, y\in (H^s_m)^\perp \}$. Then by \eqref{uni-cont},
\begin{eqnarray*}
\int_{K_{m,\epsilon}}\phi(u)dw(u)&=&\int_{u_m\in K_{m,\epsilon}}dw_m(u_m)\int_{u_m^\perp \in K_{m,\epsilon}^\perp(u_m)}\big(\phi(u_m+u_m^\perp)-\phi(u_m)\big)+\phi(u_m)dw_m^\perp(u_m^\perp)\\
&\leq&C\epsilon+\int_{u_m\in K_{m,\epsilon}}\phi(u_m)dw_m(u_m),
\end{eqnarray*}
for $C$ independent of $m$ and $\epsilon$.

Hence,
\begin{equation}\label{third}
\int_{K_{m,\epsilon}}\phi(u)dw(u)-\int_{u_m\in K_{m,\epsilon}}\phi(u_m)dw_m(u_m)\leq C\epsilon.
\end{equation}

Therefore, combining \eqref{first}, \eqref{second} and \eqref{third}, we get the result.
\end{proof}

  Now, we show that the measure $\mu$ is absolutely continuous with respect to the Gaussian measure $w$. Recall that,
\begin{eqnarray*}
d\mu_N&=&(2\pi)^{-{N}}\prod_{|n|=1}^N |n|^{\alpha-s}e^{-\frac{1}{2}\sum_{|n|\leq N}\big||n|^{\alpha-s} u_n(t)\big|^2-\frac{\gamma}{4}\int_{\mathbb{T}}|\sum_{|n|\leq N} \frac{e^{inx}}{\langle n \rangle^s}u_n(t)|^4}du_0\prod_{1\leq |n|\leq N} du_n\\
&=&e^{-\frac{\gamma}{4}\int_{\mathbb{T}}|\sum_{|n|\leq N} \frac{e^{inx}}{\langle n \rangle^s}u_n(t)|^4}(2\pi)^{-{N}}\prod_{|n|=1}^N |n|^{\alpha-s}e^{-\frac{1}{2}\sum_{0<|n|\leq N}|n|^{2\alpha-2s} |u_n(t)|^2}du_0\prod_{1\leq |n|\leq N} du_n,
\end{eqnarray*}
and thus, $\mu_N $ is a weighted Gaussian measure.
  
  For the defocusing NLS, since
$$
|u_0|^2\leq \int_{\mathbb{T}}|u|^2 \lesssim \Big( \int_{\mathbb{T}}|u(t)|^4\Big)^{\frac{1}{2}},
$$
 we have,
$$
\int_{u_0\in \mathbb{C}} e^{-\frac{1}{4}\int_{\mathbb{T}}|\sum_{n} \frac{e^{inx}}{\langle n \rangle^s}u_n(t)|^4}du_0\lesssim\int_{u_0\in\mathbb{C}}e^{-\frac{1}{4}|u_0|^4}du_0\lesssim C,
$$
uniformly in $N$. Thus, instead of working with the full measure $\mu_N$ it is enough to work with the measure $w_N$, which is also known as the Wiener measure.
  
  For the focusing NLS, though, we don't have an a priori control over the weight $e^{\frac{1}{4}\int_{\mathbb{T}}|\sum_{n\leq N} e^{inx}\widehat{u_n(t)}|^4}$. We can overcome this problem by using a lemma of Lebovitz et al., see\cite{leb}, which applies an $L^2$ cut-off to the set of initial data, namely,
  
  \begin{lemma}
  $e^{\frac{1}{4}\int |\sum_{ 1\leq |n|\leq N} e^{inx}\widehat{u_n(t)}|^4}\chi_{\{\|u\|_{L^2}\leq B\}}\in L^1(dw_N)$ uniformly in $N$, for all $B<\infty$.
  \end{lemma}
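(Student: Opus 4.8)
The plan is to control $\int e^{\frac14 F_N}\,\chi_{\{\|u\|_{L^2}\le B\}}\,dw_N$, where $F_N:=\|\sum_{1\le|n|\le N}\widehat u_n e^{inx}\|_{L^4}^4$, through the distribution function of $F_N$. Writing $A=\{\|u\|_{L^2}\le B\}$,
\[
\int e^{\frac14 F_N}\chi_A\,dw_N \;=\; w_N(A)\;+\;\tfrac14\int_0^\infty e^{\lambda/4}\,w_N\big(A\cap\{F_N>\lambda\}\big)\,d\lambda .
\]
The zero mode is irrelevant to $F_N$ and contributes only a factor $\int_{|\widehat u_0|\le B}d\widehat u_0<\infty$, so it suffices to work with the Gaussian measure on the modes $1\le|n|\le N$ and to bound the last integral uniformly in $N$. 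For $\lambda$ below a $B$-dependent constant $\lambda_0$ the trivial bound $w_N(A\cap\{F_N>\lambda\})\le 1$ already makes the contribution finite, so the whole lemma reduces to the tail estimate $w_N(A\cap\{F_N>\lambda\})\le C(B)\,e^{-c(B)\lambda^{2\alpha}}$ for $\lambda\ge\lambda_0(B)$, uniformly in $N$; once this is in hand, the remaining integral converges because $2\alpha>1$.

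Two deterministic observations organise the tail estimate. First, Bernstein's inequality $\|P_Nu\|_{L^\infty}^2\les N\|P_Nu\|_{L^2}^2$ together with $\|f\|_{L^4}^4\le\|f\|_{L^2}^2\|f\|_{L^\infty}^2$ shows that on $A$ one has $F_N\les NB^4$; hence $A\cap\{F_N>\lambda\}=\varnothing$ once $\lambda\gtrsim NB^4$, and only $\lambda_0(B)\le\lambda\le CNB^4$ matters. Second, for such $\lambda$ put $K\asymp\lambda/B^4$ (so $1\le K<N$) and split $P_Nu=P_{[1,K]}u+P_{(K,N]}u$ into the modes of size $\le K$ and those in $(K,N]$. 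On $A$, Bernstein applied to the low piece gives $\|P_{[1,K]}u\|_{L^4}^4\les K\|u\|_{L^2}^4\le KB^4\le\tfrac1{16}\lambda$ after fixing the constant in $K$, so on $A\cap\{F_N>\lambda\}$ the high piece must carry the mass: $\|P_{(K,N]}u\|_{L^4}^4\gtrsim\lambda$, whence $\|P_{(K,N]}u\|_{L^\infty}^2\ge\|P_{(K,N]}u\|_{L^4}^4/\|P_{(K,N]}u\|_{L^2}^2\gtrsim\lambda/B^2$.

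The reason the split helps is that, because $\alpha>\tfrac12$, the field $P_{(K,N]}u$ has small variance uniformly in $N$: $\sigma_K^2:=\sup_x\mathbb{E}_{w_N}|P_{(K,N]}u(x)|^2=\sum_{K<|n|\le N}\mathbb{E}|\widehat u_n|^2\les K^{1-2\alpha}$, and by Dudley's entropy bound — finite precisely because $\alpha>\tfrac12$, the same condition under which the free field has a.s. bounded paths — its expected supremum $m_K:=\mathbb{E}_{w_N}\|P_{(K,N]}u\|_{L^\infty}$ stays bounded by a constant independent of $N$ and $K$. The Borell--TIS concentration inequality then gives, for $\lambda\ge\lambda_0(B)$ so large that $\sqrt{\lambda}/B\ge 2m_K$,
\[
w_N\big(\|P_{(K,N]}u\|_{L^\infty}^2\gtrsim\lambda/B^2\big)\;\le\;e^{-c\lambda/(B^2\sigma_K^2)}\;\le\;e^{-c\lambda K^{2\alpha-1}/B^2}\;\asymp\;e^{-c(B)\lambda^{2\alpha}},
\]
using $K\asymp\lambda/B^4$. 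This is the asserted tail bound; the constants $\lambda_0(B),c(B)$ depend on $B$ but not on $N$, which is why the argument works for every $B<\infty$. Feeding it into the distribution-function identity completes the proof.

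I expect the heart of the matter to be exactly the tail estimate of the previous paragraph: identifying the frequency threshold $K\asymp\lambda/B^4$ at which the $L^2$-cut-off pins the excess $L^4$-mass, checking that the high-frequency variance $\sigma_K^2$ is small uniformly in $N$ (this is where $\alpha>\tfrac12$ enters essentially), and keeping $\mathbb{E}\|P_{(K,N]}u\|_{L^\infty}$ under control uniformly in $N$ so that Borell--TIS applies in its effective range. Everything else — the Hölder/Bernstein bookkeeping, and the final integration, which succeeds because $2\alpha>1$ — is routine. This is in the spirit of the Lebowitz--Rose--Speer argument: the role of their $L^2$-cut-off is precisely to confine the dangerous part of $\|u\|_{L^4}^4$ to arbitrarily high frequencies, where the Gaussian field is small.
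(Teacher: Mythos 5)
Your argument is correct, and it reaches the crucial tail bound by a genuinely different route than the paper. Both proofs identify the same threshold frequency (your $K\asymp\lambda/B^4$ is the paper's $N_0\approx K^4B^{-4}$ after the change of variables $\lambda=K^4$) at which the $L^2$ cut-off forces the excess $L^4$ mass into high modes, and both produce the same tail exponent, $e^{-c(B)\lambda^{2\alpha}}=e^{-c(B)K^{8\alpha}}$, which beats $e^{\frac14\|u\|_{L^4}^4}$ precisely because $\alpha>\frac12$. The difference is in how the high-frequency contribution is estimated: the paper decomposes the high frequencies into dyadic blocks $N_i=2^iN_0$, distributes the excess with weights $a_i$, applies Bernstein blockwise to convert $\|P_{|n|\approx N_i}u\|_{L^4}>a_iK$ into a lower bound on $\|P_{|n|\approx N_i}u\|_{L^2}$, and then uses only the elementary Gaussian tail estimate \eqref{tail_estimate} on each block before summing; you keep the high piece whole, convert the $L^4$ excess into $\|P_{(K,N]}u\|_{L^\infty}^2\gtrsim\lambda/B^2$ by H\"older, and invoke Dudley's entropy bound together with Borell--TIS, using $\sigma_K^2\lesssim K^{1-2\alpha}$. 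Your route is cleaner in bookkeeping (a single split plus a layer-cake integral in place of two dyadic summations, over blocks and over levels of $\|u\|_{L^4}$), but it imports heavier Gaussian-process machinery, and the one ingredient you assert without proof --- the uniform-in-$N,K$ bound on $\mathbb{E}\|P_{(K,N]}u\|_{L^\infty}$ --- does require a short argument (the canonical metric is H\"older of exponent $\alpha-\frac12$, so the entropy integral is finite and in fact $m_K\to0$ as $K\to\infty$); this is standard for $\alpha>\frac12$ but should be written out to make the proof self-contained, whereas the paper's blockwise argument needs nothing beyond the $\chi^2$-type tail bound it proves itself.
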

  
  \begin{proof}{(cf. \cite{oh3})}
  \begin{align*}
   \int e^{\frac{1}{4}\int |\sum_{ 1\leq |n|\leq N} e^{inx}\widehat{u_n(t)}|^4}&\chi_{\{\|u\|_{L^2}\leq B\}}dw=\int\limits_{({\int |\sum\limits_{|n|=1 }^N e^{inx}\widehat{u_n(t)}|^4}\leq K)}  e^{\frac{1}{4}\int |\sum_{ 1\leq |n|\leq N} e^{inx}\widehat{u_n(t)}|^4}\chi_{\{\|u\|_{L^2}\leq B\}}dw\\
&+\sum_{i=0}^{\infty} \int\limits_{(\int |\sum\limits_{|n|=1 }^N e^{inx}\widehat{u_n(t)}|^4\in(2^iK,2^{i+1}K])} e^{\frac{1}{4}\int |\sum_{ 1\leq |n|\leq N} e^{inx}\widehat{u_n(t)}|^4}\chi_{\{\|u\|_{L^2}\leq B\}}dw\\
&\leq e^{\frac{1}{4}K^4}+\sum_{i=0}^{\infty}  e^{\frac{1}{4}(2^{i+1}K)^4} w(\{ (\int |\sum\limits_{|n|=1 }^N e^{inx}\widehat{u_n(t)}|^4>2^iK, \|u\|_{L^2}<B \}).  
\end{align*}
 
  Now to estimate the second term on the right hand side, choose $N_0$ dyadic, to be specified later. Now call $N_i=N_0.2^i$ and let $a_i$ be such that $\sum_i a_i=\frac{1}{2}$.

Then,
\begin{equation*}
w(\{ (\|u\|_{L^4}>K, \|u\|_{L^2}<B \})\leq \sum_{i=1}^{\infty}w(\{ (\|P_{\{|n|\approx N_i\}}u\|_{L^4}>a_i K \}),
\end{equation*}
and since by Sobolev embedding we have,
$$
\|P_{\{|n|\approx N_i\}}u\|_{L^4}\lesssim N_i^{\frac{1}{4}}\|P_{\{|n|\approx N_i\}} u\|_{L^2},
$$
  we see that,
\begin{eqnarray*}
w(\{ (\|u\|_{L^4}>K, \|u\|_{L^2}<B \})&\leq& \sum_{i=1}^{\infty}w(\{ \|P_{\{|n|\approx N_i\}}u\|_{L^4}>a_i K \})\\
&\leq&\sum_{i=1}^{\infty}w(\{ \|P_{\{|n|\approx N_i\}}u\|_{L^2}\gtrsim a_i N_i ^{-{\frac{1}{4}}}K \}).
\end{eqnarray*}
Letting $a_i=CN_0^{\epsilon}N_i^{-\epsilon}$ and $N_0$ such that $K \approx N_0^{\frac{1}{4}}B$, i.e.$N_0\approx K^4 B^{-4}$, we get,
\begin{eqnarray*}
w(\{ (\|u\|_{L^4}>K, \|u\|_{L^2}<B \})&\leq& \sum_{i=1}^{\infty}w\Big(\{ \big(\sum_{|n|\approx N_i} |\widehat{u_n}|^2\big)^{\frac{1}{2}}\gtrsim a_i N_i ^{-{\frac{1}{4}}}K \}\Big)\\
&\approx&\sum_{i=1}^{\infty}w\Big(\{ \big(\sum_{|n|\approx N_i} |u_n|^2\big)^{\frac{1}{2}}\gtrsim a_i N_i ^{-{\frac{1}{4}+s}}K \}\Big),
\end{eqnarray*}
and by the estimation of the tail of the Gaussian measure, (cf. \eqref{tail_estimate}), we have,
\begin{eqnarray}
w(\{ (\|u\|_{L^4}>K, \|u\|_{L^2}<B \})&\lesssim&\sum_{i=1}^{\infty} e^{-\frac{1}{4}a_i^2N_i^{(2\alpha-2s)+2s-\frac{1}{2}}K^2}\nonumber\\
&\leq&\sum_{i=1}^{\infty} e^{-\frac{1}{4}N_0^{2\epsilon}N_i^{2\alpha-\frac{1}{2}-2\epsilon}K^2}\nonumber\\
&\leq&\sum_{i=1}^{\infty} e^{-\frac{1}{4}N_0^{2\alpha-\frac{1}{2}}2^{(2\alpha-\frac{1}{2}-2\epsilon)i}K^2}\nonumber\\
&\leq&e^{-\frac{1}{4}}K^2N_0^{2\alpha-\frac{1}{2}}\nonumber\\
&\approx& e^{-\frac{1}{4}K^{2+4(2\alpha-\frac{1}{2})}}B^{4-8s}.
\end{eqnarray}
and collecting terms, we obtain,

 \begin{eqnarray*}
   \int e^{\frac{1}{4}\int |u|^4}\chi_{\{\|u\|_{L^2}\leq B\}}dw&\leq& e^{\frac{1}{4}K^4}+\sum_{i=0}^{\infty}  e^{\frac{1}{4}(2^{i+1}K)^4} w(\{ (\|u\|_{L^4}>2^iK, \|u\|_{L^2}<B \})\\
&\leq& e^{\frac{1}{4}K^4}+\sum_{i=0}^{\infty}  e^{\frac{1}{4}(2^{i+1}K)^4} e^{-\frac{1}{4}(2^iK)^{2+4(2\alpha-\frac{1}{2})}}B^{4-8\alpha}\\
&<&\infty,
\end{eqnarray*}
which proves the lemma. 
  \end{proof}
Moreover, observe that for $\|u\|_{L^2}<B$, we get $|u_0|^2\leq \sum_{n}\frac{|u_n|^2}{\langle n\rangle^{2s}}\leq B^2$. Hence, $L^2$ cut off also restricts $u_0$ to the ball $\{u_0\in \mathbb{C}: |u_0|\leq B\}$, uniformly in $N$. Therefore, combining these two results, we get that the measure $\mu_N$ is a weighted Gaussian measure with weight being uniformly in $L^1$ with respect to the Gaussian measure.

By the construction of the Gaussian measure, we see that for any compact set $E\subset H^s$, we have,
$$
w_N(E\cap H^s_N)\rightarrow w(E).
$$
Thus, using the result above we get,
$$
\mu_N(E\cap H^s_N)\rightarrow  \mu(E).
$$

\begin{proof}{(Proof of Theorem \eqref{main})}
 For the proof of the theorem and the invariance of the measure $\mu$, we follow Bourgain's arguments in \cite{bourgain_measure}. First, for any $\epsilon$ we will construct the sets $\Omega_N\subset H^s$ such that $\mu_N(\Omega_N^c)<\epsilon$ and,
\begin{equation}\label{omega}
\|u^N(t)\|_{H^s}\lesssim \Big( \log\big( \frac{1+|t|}{\epsilon} 
\big) \Big)^{\frac{1}{2}}.
\end{equation}
For that, we fix a large time $T$ and let $[-T_{LWP},T_{LWP}]$ be the local well-posedness interval for the equation \eqref{eqn:sch}. Then consider the set
$$
\Omega^K=\{ u\in H^s_N: \|u\|_{H^s}\leq K \},
$$
where, again, $H^s_N=span\{e_n: |n|\leq N\}$. We see that,
\begin{eqnarray}
w_N((\Omega^K)^c)&=&(2\pi)^{-\frac{N}{2}}\prod_{|n|=1}^N |n|^{\alpha-s}\int\limits_{\{  u\in H^s_N: \|u\|_{H^s}>K  \}} e^{-\frac{1}{2}\sum_{|n|=1}^{N}|n|^{2\alpha-2s}|u_n|^2} \prod_{|n|=1}^N du_n.\nonumber\\
&=&(2\pi)^{-\frac{N}{2}}\prod_{|n|=1}^N |n|^{\alpha-s}\int\limits_{\{  u\in H^s_N: \sum_{|n|\leq N}|u_n|^2>K^2  \}} e^{-\frac{1}{2}\sum_{|n|=1}^{N}|n|^{2\alpha-2s}|u_n|^2} \prod_{|n|=1}^N du_n.\nonumber\\
&=&(2\pi)^{-\frac{N}{2}}\int\limits_{\{ \sum_{|n|\leq N}\frac{|v_n|^2}{\langle n \rangle^{2\alpha-2s}}>K^2  \}} e^{-\frac{1}{2}\sum_{|n|=1}^{N}|v_n|^2}  \prod_{|n|=1}^N dv_n.\nonumber\\
&\leq&(2\pi)^{-\frac{N}{2}}\int\limits_{\{ \sum_{|n|\leq N}|v_n|^2>K^2  \}} e^{-\frac{1}{2}\sum_{|n|=1}^{N}|v_n|^2}  \prod_{|n|=1}^N dv_n\nonumber\\
&=&(2\pi)^{-\frac{N}{2}} \int\limits_{S_{2N}}\int\limits_K^\infty r^{2N-1}e^{-\frac{1}{2}r^2}drdS_{2N}\nonumber\\
&=&(2\pi)^{-\frac{N}{2}} \int\limits_{S_{2N}}\int\limits_K^\infty r\underbrace{r^{2N-2}e^{-\epsilon(r-\epsilon)-\frac{1}{2}\epsilon^2}}_{\leq C}e^{-\frac{1}{2}(r-\epsilon)^2}drdS_{2N}\nonumber\\
&\lesssim&(2\pi)^{-\frac{N}{2}} \int\limits_{S_{2N}}\int\limits_K^\infty re^{-\frac{1}{2}(r-\epsilon)^2}drdS_{2N}\nonumber\\
&\lesssim&(2\pi)^{-\frac{N}{2}} \int\limits_{S_{2N}}\int\limits_K^\infty (r-\epsilon)e^{-\frac{1}{2}(r-\epsilon)^2}drdS_{2N}\nonumber\\
&\leq& e^{-\frac{1}{2}(K-\epsilon)^2}\lesssim e^{-\frac{1}{4}K^2}\label{tail_estimate},
\end{eqnarray}
for $\epsilon$ small enough. Thus, $\mu_N((\Omega^K)^c)\lesssim e^{-\frac{1}{4}K^2}$.

Since $\mu_N$ is invariant under the solution operator, $S_N$ of the truncated equation, if we define the set,
$$
\Omega_N'=\Omega^K\cap S_N^{-1}(\Omega^K)\cap S_N^{-2}(\Omega^K)\cap \ldots \cap S_N^{-\frac{T}{T_{LWP}}}(\Omega^K),
$$
$\Omega_N'$ satisfies the property, $\mu_N((\Omega_N')^c)\leq \frac{T}{T_{LWP}}\mu_N((\Omega^K)^c)<TK^\theta e^{-\frac{1}{4}K^2}$, since the local well-posedness interval $[-T_{LWP},T_{LWP}]$ depends uniformly on the $H^s$ norm of the initial data. Thus if we pick $K= \Big((4+2\theta) \log\big( \frac{T}{\epsilon} 
\big) \Big)^{\frac{1}{2}} $, for $\epsilon$ small we get, 
$$
\mu_N((\Omega_N')^c)<\epsilon,
$$
and by the construction of the set $\Omega_N'$ we have,
$$
\|u^N(t)\|_{H^s}\lesssim \Big( \log\big( \frac{T}{\epsilon} 
\big) \Big)^{\frac{1}{2}},
$$
for all $|t|<T$. Moreover, if we take $T_j=2^j$ and $\epsilon_j=\frac{\epsilon}{2^{j+1}}$, and construct $\Omega_{N,j}$'s, we see that $\Omega_N= \bigcap_{j=1}^\infty \Omega_{N,j}$,  satisfies \eqref{omega}.

Also by the approximation lemma \eqref{approx}, we see that for any $s'<s$ we have,
$$
\|u(t)\|_{H^{s'}}<2A\leq C_{s'} \Big( \log\big( \frac{T}{\epsilon} 
\big) \Big)^{\frac{1}{2}} .
$$
Again by taking an increasing sequence of times, we get,
\begin{equation}\label{norm-bound}
\|u(t)\|_{H^{s'}}\leq C_{s'} \Big( \log\big( \frac{1+|t|}{\epsilon} 
\big) \Big)^{\frac{1}{2}} .
\end{equation}
Hence, if we intersect this result with an increasing sequence of $s<\alpha-\frac{1}{2}$, and taking $\Omega=\bigcap\limits_N \Omega_N$ where $(\Omega_N)$s are defined as above with $\mu_N(\Omega_N^c)<\frac{\epsilon}{2^N}$, we get that $\mu(\Omega)<\epsilon$ and that the solutions to the equation \eqref{eqn:sch} has the norm growth bound,
$$\|u(t)\|_{H^{s}}\leq C_{s} \Big( \log\big( \frac{1+|t|}{\epsilon}\big) \Big)^{\frac{1}{2}},$$
for initial data $u_0\in \Omega$. Moreover, interpolating this bound with,
$$
\|u(t)\|_{L^2}=\|u_0\|_{L^2},
$$
we have,
$$\|u(t)\|_{H^{s}}\leq C \Big( \log\big( \frac{1+|t|}{\epsilon}\big) \Big)^{s+},$$
 which proves theorem \eqref{main}.
\end{proof}

\subsection{Invariance of $\mu$ Under the Solution Flow}

Let $K$ be a compact set and $B_\epsilon$ denote the $\epsilon$ ball in $H^s$. Let $S$ be the flow map for the equation \eqref{eqn:sch} and $S_N$ be the flow map for the equation \eqref{trun}. Then by the weak convergence of the measure,
$$
\mu(S(K)+B_\epsilon)=\lim_{N\rightarrow \infty}\mu_N\big( (S(K)+B_\epsilon)\cap H^s_N \big).
$$
Also by the uniform convergence of the solutions of \eqref{trun} to \eqref{eqn:sch} in $H^{s_1}$ for any $s_1<s$, we get,
$$
S_N(P_N K)\subset S(K)+B_{\epsilon/2},
$$
for $N\geq N_0$ sufficiently large. 
  Then for $\epsilon_1$ small enough,
$$
S_N\big( (K+B_{\epsilon_1})\cap H^s_N \big)\subset S_N(P_N K)+B_{\epsilon/2}\subset S(K)+B_\epsilon.
$$
Hence,
$$
\mu_N\big(S_N\big( (K+B_{\epsilon_1})\cap H^s_N \big)\big)\leq \mu_N\big(S(K)+B_\epsilon\big),
$$
and by the invariance of $\mu_N$, we get,
$$
\mu_N\big( (K+B_{\epsilon_1})\cap H^s_N \big)\leq \mu_N\big(S(K)+B_\epsilon\big),
$$
and letting $N\rightarrow \infty$, by the convergence of the measures $\mu_N$ to $\mu$,
$$
\mu(K)\leq \mu(K+B_{\epsilon_1})\leq \mu(S(K)+B_\epsilon),
$$
which say, by the arbitrariness of $\epsilon$, that $\mu(K)\leq \mu(S(K))$, and by the time reversibility, we also have the inverse inequality and, thus,
$$
\mu(K)=\mu(S(K)),
$$
which gives the invariance of $\mu$ under the solution operator.

\textbf{Acknowledgement:} The author wants to thank Burak Erdo\u{g}an and Nikolaos Tzirakis for their valuable comments and support.

 \end{document}